\newcommand\id{\leavevmode\hbox{\small1\kern-3.3pt\normalsize1}}
\newtheorem{theorem}{Theorem}
\begin{document}
\title{Characterization of Nonlocal Resources Under Global Unitary Action}

\author{Arup Roy}
\email{arup145.roy@gmail.com}
\affiliation{Physics and Applied Mathematics Unit, Indian Statistical Institute, 203 B. T. Road, Kolkata 700108, India.}

\author{Some Sankar Bhattacharya}
\email{somesankar@gmail.com}
\affiliation{Physics and Applied Mathematics Unit, Indian Statistical Institute, 203 B. T. Road, Kolkata 700108, India.}

\author{Amit Mukherjee}
\email{amitisiphys@gmail.com}
\affiliation{Physics and Applied Mathematics Unit, Indian Statistical Institute, 203 B. T. Road, Kolkata 700108, India.}

\author{Nirman Ganguly} 
\email{nirmanganguly@gmail.com}
\thanks{On leave from Department of Mathematics, Heritage Institute of Technology, Kolkata-107, India}
\affiliation{Physics and Applied Mathematics Unit, Indian Statistical Institute, 203 B. T. Road, Kolkata 700108, India.}

\begin{abstract}
The task of device independent secure key distribution requires preparation and subsequently distribution of nonlocal resources. For secured practical implementation, one needs to take two initially uncorrelated quantum systems and perform a unitary on the composite system to generate the nonlocal resource, which is supposed to violate a Bell inequality (say, Bell-CHSH). States which do not violate Bell-CHSH inequality, but violate it when transformed by a global unitary, can be deemed \emph{useful} for the preparation of nonlocal resource. One may then start from a state which is Bell-CHSH local (take for example, a pure product state) and apply an appropriate global unitary on it which results in a Bell-CHSH non-local state. However, an intriguing fact is the existence of \emph{useless} states from which no Bell-CHSH non-local resource can be generated with a global unitary. This is due to the purity preserving nature of unitary operators which bound the amount of correlation in the set of final states depending on the purity of the initial (possibly uncorrelated) states. The present work confirms the existence of such a set, pertaining to two qubit systems. The set exhibits counter intuitive features by containing within it some entangled states which remain Bell-CHSH local on the action of any unitary. From practical perspective, this work draws a line between \emph{useful} and \emph{useless} states for the task of preparing nonlocal resource using global unitary transformations. Furthermore through an analytic characterization of the set we lay down a generic prescription through which one can operationally identify the \emph{useful} states. It has also been shown that our prescription remains valid for any linear Bell inequality.                
\end{abstract}
\pacs{03.67.Ac,03.67.Mn}

\maketitle
\section{Introduction}
Entanglement \cite{Review Ent} and nonlocality \cite{Review NL} are considered to be significant resources that quantum mechanics offers. They have a ubiquitous role in information processing tasks and foundational principles, whenever one has to certify quantum advantage over conventional classical procedures. Entanglement is a physical phenomenon in which many particles interact in a manner that the description of each particle separately does not suffice to describe the composite system. A resource of this kind  can be used to demonstrate one of the strongest form of non classical feature i.e non-locality where the statistics generated from each subsystem can not be reproduced by any local realistic theory analogous to classical physics \cite{Bell64,CHSH69}. Bell nonlocal correlation along with entanglement are found to be key resources for many information processing tasks such as teleportation \cite{Bennett93}, dense coding \cite{Bennett92}, randomness certification \cite{random}, key distribution \cite{key}, dimension witness\cite{dw}, Bayesian game theoretic applications \cite{game}. \\

However, the question of identifying an entangled state remains one of the most involved problems in quantum information. Commonly phrased as the \textquote{separability problem}, it has been shown to be NP hard\cite{gurvits1}. In lower dimensions viz. ($2 \otimes 2$ and $2 \otimes 3$) there is an elegant necessary and sufficient criterion criterion to identify entangled states. Negative partial transpose of a quantum state is considered to be a signature of entanglement whereas states having positive partial transpose(PPT) are separable \cite{peres,horodecki}. The solution in higher dimensions lacks a bi-directional logic to certify a state to be entangled, more so with the presence of PPT entangled states \cite{bound}. Nevertheless, an extremely useful operational criteria to detect entanglement is provided through entanglement witnesses(EW)\cite{horodecki,terhal,review}. An outcome of the well-known Hahn-Banach theorem in functional analysis,entanglement witnesses $W$ are hermitian operators having at least one negative eigenvalue which satisfy the inequalities (i) $Tr(W \varrho_{sep}) \geq 0, \forall$ separable states $\varrho_{sep}$ and (ii) $Tr(W\varrho_{ent}) < 0$ for at at least one entangled state $\varrho_{ent}$. The geometric form of the theorem states that points lying outside a convex and closed set can be efficiently separated from the set by a hyperplane  \cite{holmes}. The completeness of this existence guarantees that whenever a state is entangled there is a EW to detect it \cite{horodecki}. On the virtue of being hermitian, entanglement witnesses have proved their efficacy in experimental detection of entanglement \cite{barbieri,wiec}. The notion of this separability axiom has been extended to identify useful resources for teleportation using teleportation witnesses \cite{telwit,telwit2,telwitcom}. An elegant procedure to capture non-locality is through a Bell-CHSH witness \cite{bellwit}. This Bell-CHSH witness is a translation of an EW to detect states which violate the Bell-CHSH inequality.\\
\indent Violation of the Bell-CHSH inequality was first translated in the language of state parameters in \cite{horobell}. For two qubits, all quantum states $\rho$ do not violate the Bell-CHSH inequality iff $M(\rho) \leq 1$, where $M(\rho)$ is defined as the sum of the two largest eigenvalues of the matrix $T_{\rho}^{t}T_{\rho}$, $T_{\rho}$ being the correlation matrix in the Hilbert-Schmidt representation of $\rho$. Thus, $M(\rho) > 1$ is a signature of the non-locality of the state\cite{horobell}. \\ 

The prominent role of non-local states are highlighted through non-local games, device independent quantum key distribution (DI-QKD) and randomness certification. Randomness \cite{random} plays a key role in many information theoretic tasks. It has already been shown to be an important resource for quantum key distribution and cryptography. So an interesting question could be whether one can classify the states which are helpful to certify randomness, that has been answered affirmatively in recent times. Now the question arises whether this class of resources can be expanded in a scenario where prior to the task all the subsystems are subjected to a global unitary operation. In this paper we deal with this question by characterizing the class of local states which can never be \emph{Bell CHSH-Nonlocal} under any possible global unitary operation subsequently termed as absolutely \emph{Bell CHSH-local} states. As one can see that this class of states can never be useful for randomness certification task and DI-QKD. We have further shown that for systems with Hilbert space $\mathbb{C}^2\otimes \mathbb{C}^2$ these states form a convex and compact set. This implies the existence of a hermitian operator which can detect non-absolutely \emph{Bell CHSH-local} states, a potential resource in the modified scenario.\\

In the following section (Sec.\ref{motivate}) we first outline the need for an operator to identify non-absolutely \emph{Bell CHSH-local} states and its importance for a number of information theoretic tasks. The question of existence of a set containing absolutely \emph{Bell CHSH-local} states has also been addressed. In Sec.\ref{def} we introduce the relevant notations and definitions to prepare the required mathematical framework. In Sec.\ref{proof} we present the proof of the existence and a definite scheme of constructing such operators with illustrations of absolutely \emph{Bell CHSH-local} states in Sec.\ref{examples}. Finally we conclude in Sec.\ref{discuss}.    
    
\section{Motivation}\label{motivate}
\begin{figure} 
\includegraphics[scale=0.3]{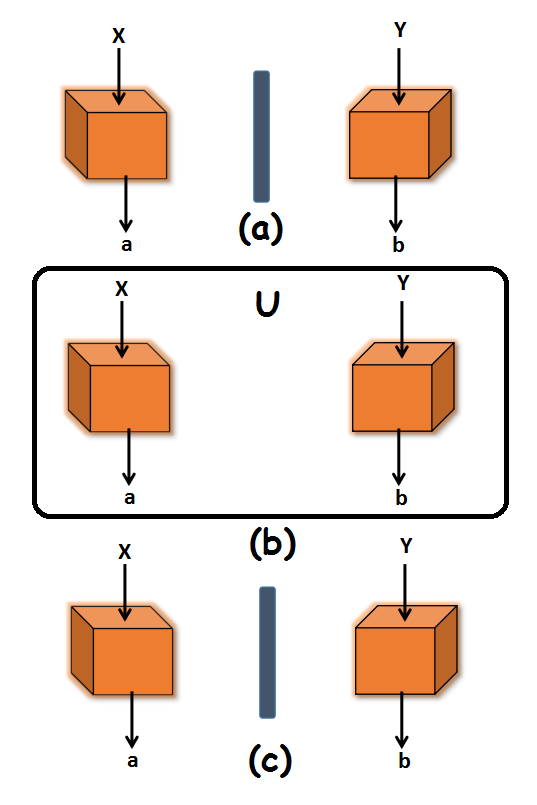} 
\caption{This figure depicts a bipartite scenario where the initially uncorrelated parties (a) are allowed to perform a global unitary operation $\mathbf{U}$ (b). Now the unitary-evolved system is used to play a non-local game (c).}
\label{belldiag}
\end{figure}
Pertaining to separability of quantum states, questions have been raised on the characterization of absolutely separable\cite{vers,johnston} and absolutely PPT states\cite{absppt}. Precisely, a quantum state which is entangled(respectively PPT) in some basis might not be entangled(resp. PPT) in some other basis. This depends on the factorizability of the underlying Hilbert space. Thus, the characterization of states which remain separable(resp. PPT) under any factorization of the basis is pertinent. Literature already contains results in this direction \cite{vers,johnston,absppt,witabs}. Precisely for two qubits , a state is absolutely separable iff its eigenvalues(arranged in descending order) satisfy $\lambda_{1} \leq \lambda_{3} + 2 \sqrt{\lambda_{2} \lambda_{4}}$ \cite{vers}.

In a different perspective, violation of Bell-CHSH inequalities exhibits non-local manifestations of a quantum state. A state which violates the Bell-CHSH inequality is considered non-local. However, the violation of such an inequality invariably depends on the factorization of the underlying Hilbert space and in this regard the study on states which do not violate the Bell-CHSH inequality under any factorization assumes significance. This study forms the main context of our present work.While intuition permits one to state that states which are absolutely separable will be eligible candidates under this classification, it is interesting to probe the existence of entangled states which come under this category. Our results underscore the existence of states outside the absolute separable class , which do not violate Bell-CHSH inequality under any global unitary operation. Global unitary operations play the anchor role here as they transform states to different basis.\\

In standard Bell-scenario the subsystems are allowed to share classical variables prior to the game and perform local operations only. For our purpose we consider a modified Bell-scenario (depicted in Fig.\ref{belldiag}) in which two parties are allowed to perform a global unitary $\mathbf{U}$ prior to the collection of statistics from the joint-system, which we call preparation phase. One can easily note that by performing a CNOT operation on the initial state $(\alpha|0\rangle +\beta|1\rangle)\otimes |0\rangle$ (which is Bell-CHSH local), it can be transformed to $\alpha|00\rangle +\beta|11\rangle$ which has a maximum Bell-CHSH violation of $2\sqrt{1+4|\alpha|^2|\beta|^2}$. This clearly shows that the set of resources (useful initial states) for this modified Bell-CHSH scenario can be expanded to a plausible extent.\\

Now the question is \textit{how far is this extension possible ?}. In this context one might identify the maximally mixed state (the state of minimum purity), as a state which cannot be made non-local by any unitary as it remains same in any basis. However, this is only a trivial insight. Contrary to common intuition there exists states(apart from the maximally mixed state) which do not violate Bell-CHSH inequality under any global unitary operation. We begin by fixing some notations and definitions below. 
  
\section{Notations and Definitions}\label{def} 
$ \mathfrak{B}(X) $ denotes the set of bounded linear operators acting on X. The density matrices that we consider here, are operators acting on two qubits, i.e., $\rho \in \mathfrak{B}(\mathbb{C}^{2} \otimes \mathbb{C}^{2})$. $\mathbf{Q}$ denotes the set of all density matrices.  We denote by $\mathbf{L}$, the set of all states which do not violate the Bell-CHSH inequality \cite{CHSH69}. Recall that any density matrix in two qubits can be written in the Hilbert-Schmidt representation, where $ T_\rho $ denotes the correlation matrix corresponding to $ \rho $. The function $ M(\rho) $  is defined as the sum of the maximum two eigenvalues of $ T_{\rho}^{t}T_{\rho} $. Any state with $ M(\rho) \le 1 $ is considered local with repect to the Bell-CHSH inequality. Hence, the set local set $ \mathbf{L}$ can be expressed mathematically as $\mathbf{L}=\lbrace \rho : M(\rho) \leq 1 \rbrace$. We denote by $ \mathbf{AL} $ as the set containing states which do not violate the Bell-CHSH inequality under any global unitary operation~($ U $) i.e. $\mathbf{AL} = \lbrace \sigma \in \mathbf{L}: M(U \sigma U^{\dagger}) \leq 1 ~ \forall~ U  \rbrace $. One can easily see that $\mathbf{AL}$ forms a non-empty subset of $\mathbf{L}$, as $\frac{1}{4}(I \otimes I) \in \mathbf{AL}$. A schematic diagram of the sets has been shown in Fig.\ref{sets}.
\begin{figure}[b!]
\includegraphics[scale=0.3]{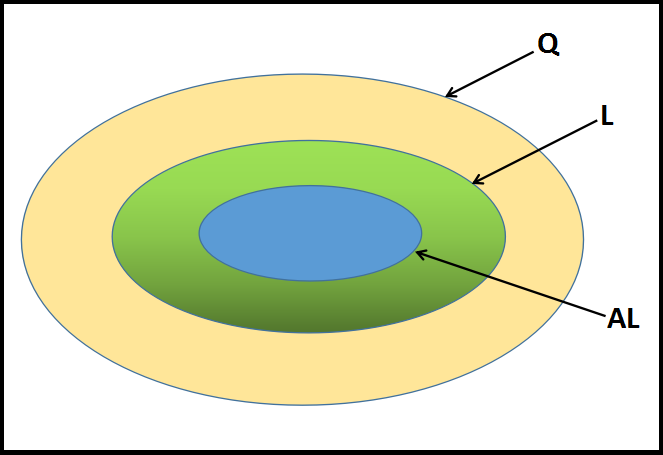}
\caption{A schematic representation of the set $\mathbf{AL}$.}\label{sets}
\end{figure}

\section{Characterization of Absolutely Bell CHSH-local states }\label{proof}
\subsection{Main Result}
The set $ \mathbf{L} $ is characterized by the existence of the Bell-CHSH witness \cite{bellwit} .However, we give a formal characterization below:\\
\begin{theorem}  $\mathbf{L}$ is a convex and compact subset of  $\mathbf{Q}$.
\end{theorem}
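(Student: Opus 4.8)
The plan is to derive both properties from two structural facts: the assignment $\rho\mapsto T_\rho$ is linear, and $M(\rho)$ is a continuous, convex function of $T_\rho$. Writing a two-qubit state in the Hilbert--Schmidt basis, the entries of the correlation matrix are $(T_\rho)_{ij}=\tr\!\big(\rho\,(\sigma_i\otimes\sigma_j)\big)$, which are linear functionals of $\rho$; hence $\rho\mapsto T_\rho$ is a linear map, and since eigenvalues depend continuously on matrix entries, $M$ is a continuous function on the finite-dimensional real vector space of Hermitian operators on $\mathbb{C}^2\otimes\mathbb{C}^2$. Throughout I would work inside this real vector space, identifying $\mathbf{Q}$ and $\mathbf{L}$ as subsets of it.

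For compactness, I would first recall that $\mathbf{Q}$ is itself compact: it is closed, being cut out by the closed conditions $\rho=\rho^\dagger$, $\tr\rho=1$ and $\rho\ge 0$, and bounded, since $\tr(\rho^{2})\le\tr\rho=1$ controls the Hilbert--Schmidt norm. Because $M$ is continuous, the sublevel set $\{\rho: M(\rho)\le 1\}$ is closed, so $\mathbf{L}=\mathbf{Q}\cap\{M\le 1\}$ is the intersection of a compact set with a closed set, hence closed and bounded, i.e. compact.

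The substantive step is convexity, and the difficulty is that $M(\rho)$ is the sum of the two largest eigenvalues of the \emph{quadratic} expression $T_\rho^{t}T_\rho$, not merely the sum of the two largest singular values, so convexity is not immediate. The plan is to invoke Ky Fan's maximum principle: for a symmetric matrix $A$ the sum of its two largest eigenvalues equals $\max_{P}\tr(AP)$, the maximum ranging over rank-two orthogonal projections $P$. Applying this with $A=T_\rho^{t}T_\rho$ gives $M(\rho)=\max_{P}\tr(T_\rho^{t}T_\rho P)=\max_{P}\|T_\rho P\|_F^{2}$, using $P=P^{2}=P^{t}$. For each fixed $P$ the map $T\mapsto\|TP\|_F^{2}$ is convex, being a squared norm composed with a linear map, and since $\rho\mapsto T_\rho$ is linear, each $\rho\mapsto\|T_\rho P\|_F^{2}$ is convex in $\rho$. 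A pointwise supremum of convex functions is convex, so $M$ is convex on the space of states.

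Convexity of $\mathbf{L}$ then follows immediately: $\{\rho: M(\rho)\le 1\}$ is a sublevel set of a convex function, hence convex, and intersecting it with the convex set $\mathbf{Q}$ yields a convex $\mathbf{L}$. I expect the main obstacle to be precisely the convexity of $M$, and the key that unlocks it is the Ky Fan representation together with the observation that each slice $\tr(T^{t}TP)=\|TP\|_F^{2}$ is a positive semidefinite quadratic form; it is the supremum over $P$, rather than any individual eigenvalue, that is convex.
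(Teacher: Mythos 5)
Your proof is correct, but it takes a genuinely different route from the paper's. The paper never touches the function $M(\rho)$ in its proof: it invokes the Bell-CHSH \emph{witness} characterization of $\mathbf{L}$ (via \cite{bellwit}), writes $\mathbf{L}=\bigcap_i \{\rho : Tr(B_{CHSH}^{W_i}\rho)\geq 0\}$ as an intersection of sets cut out by linear constraints, and gets closedness from continuity of the trace functionals, convexity from their linearity, and compactness because a closed subset of the compact set $\mathbf{Q}$ is compact. You instead work directly with the Horodecki state-parameter criterion and prove the nontrivial fact that $M$ itself is a \emph{convex} function of $\rho$, handling the quadratic dependence $T_\rho^{t}T_\rho$ via Ky Fan's maximum principle so that $M$ becomes a supremum of positive semidefinite quadratic forms $\rho\mapsto\|T_\rho P\|_F^{2}$. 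The trade-off: the paper's argument is more elementary (nothing beyond linearity and continuity of the trace) and generalizes verbatim to any linear Bell inequality, which the paper exploits later when it characterizes $\mathbf{L}_X$ and $\mathbf{AL}_X$; your argument requires the heavier Ky Fan machinery but delivers a strictly stronger conclusion --- convexity of $M$ on the whole state space, not merely convexity of the sublevel set $\{M\leq 1\}$ --- which is of independent interest (for instance it immediately bounds the Bell-CHSH strength of any mixture), and it avoids relying on the completeness of the witness characterization. One small point worth making explicit if you keep your route: the identity $Tr(T^{t}TP)=\|TP\|_F^{2}$ uses $P=P^{2}=P^{t}$ together with cyclicity of the trace, and the supremum defining $M$ ranges over the compact set of rank-two orthogonal projections, so the supremum is attained and continuity of $M$ also follows from this representation.
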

\begin{proof}
 First note that the statements below are equivalent:\\
(i) $\rho \in \mathbf{L}$\\
(ii) $ \forall $ Bell-CHSH operator $ B_{CHSH} , Tr(B_{CHSH}\rho) \leq 2$\\
(iii) $ \forall $ Bell-CHSH witness $ B_{CHSH}^{W}, Tr(B_{CHSH}^{W}\rho) \geq 0 $\\
In view of the above, we can rewrite $ \mathbf{L} $ as, $ \mathbf{L} = \lbrace \rho: Tr(B_{CHSH}^{W}\rho) \geq 0, \forall B_{CHSH}^{W}   \rbrace $. Now consider a function $ f_{1}: \mathbf{Q} \rightarrow \mathbb{R} $, defined as
\begin{equation}
f_{1}(\chi)=Tr(B_{CHSH}^{W_{1}}\chi)
\end{equation}
where, $B_{CHSH}^{W_{1}}$ is a fixed Bell-CHSH witness. Let $ L_{1}= \lbrace \chi_{1}:Tr(B_{CHSH}^{W_{1}}\chi_{1}) \geq 0 \rbrace $.  $ Tr(B_{CHSH}^{W_{1}}\chi_{1})  $ will have a maximum value $ d_{1}$ (say). Therefore, one may write $ L_{1} = f_{1}^{-1}[0,d_{1}] $. $ f_{1} $ is a continuous function as $ Tr $ is a continuous function. This in turn implies $ L_{1} $ is a closed set. Continuing as above, one may define $ L_{i} $ for a fixed $B_{CHSH}^{W_{i}}  $. $ L_{i} $ will be closed $ \forall  i $. Since, arbitrary intersection of closed sets is closed, $ \bigcap_{i}L{i} $ is closed. It is easy to see that $ \bigcap_{i}L{i} = \mathbf{L} $ . Hence, $ \mathbf{L} $ is closed. If we now take two arbitrary $ \rho_{1},\rho_{2} \in \mathbf{L}$ , then $ Tr[B_{CHSH}^{W}(\lambda \rho_{1}+(1-\lambda)\rho_{2})] \geq 0$ for any $ B_{CHSH}^{W}  $, $ \lambda \in [0,1] $. This follows from the fact that $ Tr[B_{CHSH}^{W}\rho_{i}] \geq 0 , i=\lbrace 1,2 \rbrace$ for any $ B_{CHSH}^{W} $.Thus $ \mathbf{L} $ is convex. Since $ \mathbf{Q} $ is compact, $ \mathbf{L} $ being a closed subset of $ \mathbf{Q} $, is thus  compact. Hence the theorem.
\end{proof}
This theorem facilitates the characterization of the set $ \mathbf{AL} $ as stated in the theorem below:\\
\begin{theorem} $\mathbf{AL}$ is a convex and compact subset of $\mathbf{L}$.
\end{theorem}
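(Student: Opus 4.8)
The plan is to mirror the topological structure of the proof just given for $\mathbf{L}$, exploiting the observation that $\mathbf{AL}$ is an intersection, taken over all global unitaries, of sets each built exactly like $\mathbf{L}$. First I would dispose of the inclusion $\mathbf{AL}\subseteq\mathbf{L}$, which is immediate from the definition: the condition $\sigma\in\mathbf{L}$ is imposed outright, and in any case the choice $U=I$ in the requirement $M(U\sigma U^{\dagger})\le 1$ recovers $M(\sigma)\le 1$.

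For convexity, I would take $\sigma_{1},\sigma_{2}\in\mathbf{AL}$ and $\lambda\in[0,1]$. The key point is that conjugation by a fixed unitary is linear, so that
\begin{equation}
U\bigl(\lambda\sigma_{1}+(1-\lambda)\sigma_{2}\bigr)U^{\dagger}=\lambda\,U\sigma_{1}U^{\dagger}+(1-\lambda)\,U\sigma_{2}U^{\dagger}.
\end{equation}
Since $\sigma_{1},\sigma_{2}\in\mathbf{AL}$, both $U\sigma_{1}U^{\dagger}$ and $U\sigma_{2}U^{\dagger}$ lie in $\mathbf{L}$ for this $U$; by the convexity of $\mathbf{L}$ established in the preceding theorem, their convex combination is again in $\mathbf{L}$, i.e. $M\bigl(U(\lambda\sigma_{1}+(1-\lambda)\sigma_{2})U^{\dagger}\bigr)\le 1$. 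As $U$ was arbitrary, $\lambda\sigma_{1}+(1-\lambda)\sigma_{2}\in\mathbf{AL}$, so $\mathbf{AL}$ is convex.

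For compactness, I would write $\mathbf{AL}=\bigcap_{U}\lbrace\sigma\in\mathbf{Q}:M(U\sigma U^{\dagger})\le 1\rbrace$ and show that each set in the intersection is closed. Define $g_{U}(\sigma)=M(U\sigma U^{\dagger})$. The map $\sigma\mapsto U\sigma U^{\dagger}$ is linear, and the correlation matrix $T_{\sigma}$ depends linearly, hence continuously, on $\sigma$; since the eigenvalues of the symmetric matrix $T_{\sigma}^{t}T_{\sigma}$, and therefore the sum $M$ of its two largest eigenvalues, vary continuously with its entries, $g_{U}$ is continuous on $\mathbf{Q}$. Thus $\lbrace\sigma:g_{U}(\sigma)\le 1\rbrace=g_{U}^{-1}\bigl((-\infty,1]\bigr)$ is closed, and an arbitrary intersection of closed sets is closed, so $\mathbf{AL}$ is closed. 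Being a closed subset of the compact set $\mathbf{Q}$, $\mathbf{AL}$ is compact.

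The main obstacle I anticipate is justifying the continuity of $M$ rigorously, since everything else is a routine repackaging of the reasoning already used for $\mathbf{L}$. Here one must argue that passing from $\rho$ to $T_{\rho}$ and then to the ordered spectrum of $T_{\rho}^{t}T_{\rho}$ preserves continuity. This rests on the standard fact that the sorted eigenvalues of a real symmetric matrix are continuous functions of its entries, so that the sum of the largest two is continuous; once this is in hand, the convexity and the closed-intersection argument close out the proof.
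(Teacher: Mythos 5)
Your proof is correct, and it splits naturally into two halves that compare differently with the paper. The convexity argument is essentially identical to the paper's: both exploit the linearity of conjugation, $U[\lambda\sigma_{1}+(1-\lambda)\sigma_{2}]U^{\dagger}=\lambda U\sigma_{1}U^{\dagger}+(1-\lambda)U\sigma_{2}U^{\dagger}$, and then invoke the convexity of $\mathbf{L}$ from the preceding theorem. The compactness half is where you genuinely diverge: the paper does not prove compactness at all, but defers it to a ``retrace of the steps'' in the cited reference on absolutely separable states, whereas you give a self-contained argument, writing $\mathbf{AL}=\bigcap_{U}g_{U}^{-1}\bigl((-\infty,1]\bigr)$ with $g_{U}(\sigma)=M(U\sigma U^{\dagger})$ and using continuity of sorted eigenvalues of symmetric matrices to get closedness, then compactness from $\mathbf{Q}$. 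This buys a complete, citation-free proof, which is a real improvement in rigour over the paper's treatment. One simplification worth noting: the continuity of $M$, which you flag as the main technical obstacle, can be bypassed entirely. Theorem 1 already establishes that $\mathbf{L}$ is closed, and conjugation $\sigma\mapsto U\sigma U^{\dagger}$ is continuous (it is linear on a finite-dimensional space), so each set $\lbrace\sigma: U\sigma U^{\dagger}\in\mathbf{L}\rbrace$ is closed as the preimage of a closed set under a continuous map; intersecting over $U$ then finishes the argument without ever touching the spectrum of $T_{\sigma}^{t}T_{\sigma}$.
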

\begin{proof}
We only show that $ \mathbf{AL} $ is convex as the compactness follows from a retrace of the steps presented in \cite{witabs}. \\
Take two arbitrary $ \sigma_{1}, \sigma_{2} \in \mathbf{AL} $. One may rewrite $ \mathbf{AL}= \lbrace \sigma: Tr[B_{CHSH}^{W}(U \sigma U^{\dagger})] \geq 0, \forall B_{CHSH}^{W} , \forall U \rbrace $. Therefore, for any $ U $, $ U[\lambda \sigma_{1} + (1-\lambda) \sigma_{2}]U^{\dagger} = \lambda \sigma_{1}^{\prime} + (1-\lambda)\sigma_{2}^{\prime} \in \mathbf{AL} $. This follows, since $ \mathbf{L} $ is convex. [$ \sigma_{i}^{\prime}=U \sigma_{i} U^{\dagger} $].\\

As noted earlier, one may see the compactness with a re-run of the steps in \cite{witabs}. Hence, the theorem.
\end{proof}
The above  characterization enables to formally define an operator($ W^{B} $) which detects states that violate Bell-CHSH inequality under global unitary. 
\begin{eqnarray}
Tr(W^{B}\sigma) \geq 0 , \forall \sigma \in \mathbf{AL}\label{ineq1}\\
\exists \rho \in \mathbf{L}-\mathbf{AL} , Tr(W^{B}\rho) < 0 \label{ineq2}
\end{eqnarray}
Consider $\rho \in \mathbf{L}-\mathbf{AL}$. There exists a unitary operator $U_{e}$ such that $U_{e}\chi U_{e}^{\dagger}$ violates Bell-CHSH inequality. Consider a 
Bell-CHSH witness $W$\cite{bellwit} that detects $U_{e}\rho U_{e}^{\dagger}$, i.e., $Tr(W U_{e}\rho U_{e}^{\dagger}) < 0$. Using the cyclic property of the trace, 
one  obtains $Tr(U_{e}^{\dagger}W U_{e}\rho ) < 0$. We thus claim that  
\begin{equation}
W^{B}=U_{e}^{\dagger}W U_{e}
\label{witop}
\end{equation}
is our desired operator. To see that it satisfies inequality (\ref{ineq1}), we consider its action on a state $\sigma$ from $\mathbf{AL}$. 
We have $Tr(W^{B}\sigma)=Tr(U_{e}^{\dagger}W U_{e} \sigma)=Tr(W U_{e} \sigma U_{e}^{\dagger} )$. As $\sigma $ $ \in  \mathbf{AL} $,and $ W $ is a Bell-CHSH witness $ Tr(W U_{e} \sigma U_{e}^{\dagger} ) \geq 0$. This implies that $W^{B}$ has a non-negative expectation value on all  states $\sigma \in \mathbf{AL}$. \\
\subsection{Observation regarding the chracterization for any other Bell's inequality}
One may note that the above characterization can be done for any other Bell's inequality, other than the CHSH inequality, with the same run of steps given above.\\
To see that, consider another Bell's inequality given in the form ,
\begin{equation}
Tr(B_{X}\rho_{X}) \leq c
\end{equation}
 where $c$ is any constant and $B_{X}$ is the corresponding Bell operator. 
This can be equivalently expressed in terms of an inequality witness $B_{X}^{W}$ in the form ,
\begin{equation}
Tr(B_{X}^{W}\rho_{X}) \geq 0
\end{equation}
Analogous to the above construction , the local set here can be expressed as ,
$ \mathbf{L_{X}} = \lbrace \rho_{X} : Tr(B_{X}^{W} \rho_{X}) \geq 0 ,\forall B_{X}^{W} \rbrace $
and the absolutely local set as ,
$ \mathbf{AL_{X}} = \lbrace \sigma_{X} : Tr[B_{X}^{W} (U \sigma_{X} U^{\dagger}) ] \geq 0 , \forall B_{X}^{W} , \forall U \rbrace $
where $U$ is unitary. \\
It is easy to verify that a parallel execution of the steps as given for the Bell-CHSH inequality , will characterize $\mathbf{L_{X}}$ and $\mathbf{AL_{X}}$ in an equivalent manner. Consequently,one can construct a witness to detect states living in $\mathbf{L_{X}}-\mathbf{AL_{X}}$, as potential non-local resources under global unitary.
\section{A few examples}\label{examples}
Before we go to some explicit illustrations on absolutely Bell-CHSH local states it is important here to note the changes in the density matrix when it undergoes a global unitary transformation, which has a modifying effect on the \textit{"Bell-CHSH local"} character of the matrix.\\ 
A density matrix in two qubits can be expressed as ,
\begin{equation}
\chi = \frac{1}{4}[I \otimes I + \vec{u}.\vec{s} \otimes I + I \otimes \vec{v}.\vec{s} + \Sigma_{i,j=1}^{3} t_{ij} s_i \otimes s_{j} ]
\end{equation}
Here, $ \vec{u} , \vec{v}  $ are the local bloch vectors and $ t_{ij} = Tr[\chi(s_i \otimes s_j)] $, $ s_{i}  $ are the Pauli matrices.
As noted earlier, in \cite{horobell}, the condition for non-locality was based on the value of a function $ M(\chi) = \lambda_{max1} + \lambda_{max2}; \lambda_{max1}, \lambda_{max2}$ being the maximum two eigen values of $ Y = T^{t}T $ where $ T = [t_{ij}] $ is the correlation matrix of $ \chi $ and $ t $ denotes transposition. A state $ \chi $ is Bell-CHSH local iff $ M(\chi) \le 1 $.

Using Cartan decomposition of unitary matrices, any $ U \in U(4) $ can be decomposed as ,
	\begin{equation}
	U = U_A \otimes U_B U_d V_A \otimes V_B \label{cartan1}
	\end{equation}
	where $ U_A , U_B , V_A , V_B$ are local unitaries and $ U_d $ is the basic non-local unitary \cite{optcreate}. A local unitary $ U_L $ changes the correlation matrix $ T $ corresponding to $ \chi $ in the following way, $ T^ \prime = Q_1 T Q_2^t$, where $ Q_i s$ are rotation matrices with $ det(Q_i)=1 , Q_i^t Q_i=I$ \cite{infhoro}. Therefore, 
	\begin{eqnarray}
	T^ {\prime^t} T^ \prime = Q_2 T^t T Q_2^t
	\end{eqnarray}
Since, the above relation signifies a similarity transformation, the eigenvalues of $ T^t T  $ remain unchanged signalling the invariance of $ M(\chi) $ under local unitary transformation. Therefore, $ M(\chi) $ can only be changed by the action of the basic non-local operator $ U_{d} $. Unlike local unitaries, the basic non-local unitary changes the eigenvalues of $ T^t T$ with contributions from the local bloch vectors. Hence, a global unitary changes the "\textit{Bell-CHSH local}" character of the density matrix.\\
While the theorem in the previous section has provided a tool to identify states which can augment non-local resources under global unitary, it has also highlighted the existence of a set which contains states from which no non-local resource (in terms of the Bell-CHSH inequality) can be generated . It is therefore important to look for certain states which can belong to the absolutely Bell-CHSH local set.

\subsection{Absolutely Separable States}
It is evident that , any separable state obviously belongs to $\mathbf{L}$. From the definition of absolutely separable states, i.e. $\mathbf{AS} = \lbrace \sigma \in \mathbf{S}: U \sigma U^{\dagger} \mbox{is separable,} ~ \forall~ U  \rbrace $, $U$ being any global unitary operation, it is clear that after the operation of the global unitary the state remains in $\mathbf{L}$, i.e. all the absolutely separable states are absolutely Bell-CHSH local states. 

As an illustration ,consider the Bell-diagonal states, i.e. $p_1 |\phi^+\rangle \langle \phi^+|+ p_2 |\phi^-\rangle \langle \phi^-|+p_3 |\psi^+\rangle \langle \psi^+|+p_4 |\psi^-\rangle \langle \psi^-|$, where $\left\lbrace |\phi^+\rangle,|\phi^-\rangle,|\psi^+\rangle,|\psi^-\rangle \right\rbrace$ are the usual Bell states. If we now impose the following two restrictions on the coefficients of the Bell-diagonal states  $(i) \frac{1}{2}>p_1\ge p_2 \ge p_3 \ge p_4, (ii) p_1\le p_3+2\sqrt{p_2 p_4} $, then one notices that they are absolutely separable and hence $\in\mathbf{AL}$.

\subsection{Werner States}
Werner states \cite{Werner} $\sigma_{w}=p |\psi^-\rangle \langle \psi^-| + \frac{1-p}{4} \mathbb{I}$, where $|\psi^-\rangle$ being singlet state, in $2\otimes2$ are absolutely separable for $p\le \frac{1}{3}$, as a result it is also absolutely local here. It can now be asked whether there exist states which are not separable but belong to $\mathbf{AL}$. 

It is well known that $\sigma_{w}$ is entangled but does not violate Bell-CHSH for $p\in (\frac{1}{3},\frac{1}{\sqrt{2}}]$. Now consider that one applies a global unitary operator on it. The change in Bell-CHSH violation will then be contributed by the singlet part only. However since the singlet already exhibits a maximal Bell-violation of $2\sqrt{2}$, the global unitary can only worsen or keep the same Bell-CHSH violation. As a result, the Bell-CHSH violation cannot be maximized with the unitary. Mathematically for any global unitary operation $U$ the first part of the modified state $U|\psi^-\rangle \langle \psi^-|U^\dagger$ will always have Bell-CHSH value less than or equal to $2\sqrt{2}$ and the second part will be zero. 

Therefore it is evident that the Werner states with visibility factor $\frac{1}{\sqrt{2}} \ge p>\frac{1}{3}$ belong to $\mathbf{AL}$ but are entangled. Thus, for visibility factor $p\le \frac{1}{\sqrt{2}}$, the Werner states are absolutely Bell-CHSH local.   

Consider another state: $p|\psi\rangle \langle \psi| + \frac{1-p}{4} \mathbb{I}$, where $|\psi\rangle=a|00\rangle+b|11\rangle$, $a=\frac{1}{\sqrt{3}}$, which is also in $\mathbf{AL}$ for $p\le \frac{1}{\sqrt{2}}$ but entangled for $p> 0.34$.\\

\subsection{Another separable state which is in AL but not in AS}
One may note here that, since unitary similarity is an equivalence relation, it partitions the absolutely Bell-CHSH local set into distinct equivalence classes. For e.g if we define an equivalence class for a definite $\sigma_{al} \in \mathbf{AL}$ as,
\begin{equation}
	[\sigma_{al}]= \lbrace \tau : \tau = U \sigma_{al} U^{\dagger} \rbrace
\end{equation}
then, all states $\tau \in [\sigma_{al}] $ are absolutely Bell-CHSH local.\\
Hence if one considers the state having weights $\left\lbrace \frac{1-p}{4},\frac{1+3p}{4},\frac{1-p}{4},\frac{1-p}{4} \right\rbrace $ diagonal in the computational basis $\left\lbrace |00\rangle ,|01\rangle,|10\rangle,|11\rangle \right\rbrace$, then is a separable state but not absolutely separable for $p>\frac{1}{3}$ and it belongs to $[\sigma_{w}]$ . It is evident from above that the state is in $\mathbf{AL}$ for $p \le\frac{1}{\sqrt{2}}$.

\section{Discussion}\label{discuss}
In standard Bell-scenario the free resources are local operation and shared randomness. Here we have considered a modified scenario where prior to the non-local game the subsystems are allowed to undergo a global unitary evolution. All quantum states can not be made to violate Bell-CHSH inequality even in this modified scenario. In this work we have shown that for two qubit systems these `useless' states which we call absolutely \emph{Bell CHSH-local}, form a convex and compact set implying the existence of a hermitian operator which can detect non-absolutely \emph{Bell CHSH-local} states, a potential resource in the modified Bell-scenario. Here, we would like to mention that in the present work our focus is on the Bell-CHSH violation of a state and not on the corresponding local hidden variable model. However our prescription is not only limited to Bell-CHSH inequality and can be applied to any linear Bell inequality of the form $Tr({\bf B}\rho)\ge 0$. So we can characterize all absolutely local states w.r.t corresponding Bell-inequality. We also present a characterization of absolutely \emph{Bell CHSH-local} states for a number of generic class of states. Global unitary operations change the correlation matrix of a density matrix with contributions from the local bloch vectors. Hence, they change the "Bell-CHSH local" character of a density matrix, which is otherwise impossible with local unitaries. This analysis of Bell non-locality presents a new paradigm for asking a number of important questions. Firstly, one could seek for a generic characterization of absolutely \emph{Bell CHSH-local} states even for two qubits. Secondly, our study leaves open the possibility of the existence of an initial entangled state admitting a LHV model, however violating a Bell inequality when subjected to a global unitary in the preparation phase. Lastly, the question remains whether one could demonstrate the existence of non-absolutely \emph{Bell-local} witness operators for higher dimensional systems in different Bell-scenarios\cite{CGLMP} and subsequently characterizing the set of absolutely \emph{Bell-local} states for such systems.\\
{\bf Acknowledgment:} We would like to gratefully acknowledge fruitful discussions with Prof. Guruprasad Kar. We also thank Tamal Guha and Mir Alimuddin for useful discussions. AM acknowledges support from the CSIR project 09/093(0148)/2012-EMR-I.

\end{document}